\newcommand*{\at}{@}
\newtheorem*{thm}{Theorem}
\newtheorem{prop}{Proposition}
\newtheorem{lem}{Lemma}
\theoremstyle{remark}
\newtheorem{rem}{\bf Remark}
\newtheorem*{cor}{\bf Corollary}
\def\bbR{\mathbb{R}}
\def\pa{\partial}
\newcommand{\papa}[2]{\frac{\partial#1}{\partial#2}}
\newcommand{\nn}{\nonumber}
\def\dg{\dagger}
\def\df{\overset{\rm df}{=}}
\newcommand{\Tr}[1]{\mathop{{\mathrm{Tr}}_{#1}}}
\newcommand{\id}{\mathop{{\mathrm{id}}}\nolimits}
\newcommand{\rank}{\mathop{{\mathrm{rank}}}\nolimits}
\newcommand{\ave}[1]{\langle#1\rangle}
\def\GG{\mathfrak{G}}
\def\a{\alpha}
\def\d{\delta}
\def\e{\epsilon}
\def\vr{\varrho}
\def\s{\sigma}
\def\O{\mathcal{O}}
\def\D{\mathcal{D}}
\def\B{\mathcal{B}}
\def\C{\mathcal{C}}
\def\G{\mathcal{G}}
\def\I{\mathcal{I}}
\newcommand*{\GtrSim}{\smallrel\gtrsim}
\newcommand*{\smallrel}[2][.85]{%
  \mathrel{\mathpalette{\smallrel@{#1}}{#2}}%
}
\newcommand*{\smallrel@}[3]{%
  \sbox0{$#2\vcenter{}$}%
  \dimen@=\ht0 %
  \raise\dimen@\hbox{%
    \scalebox{#1}{%
      \raise-\dimen@\hbox{$#2#3\m@th$}%
    }%
  }%
}
\begin{document}

\title[Coherent information of one-mode Gaussian channels]{Coherent information of one-mode Gaussian channels -- the general case of nonzero added classical noise}

\begin{abstract}
    We prove that whenever the coherent information of a one-mode Gaussian channel is non-zero its supremum is achieved for the infinite input power. This is a well established fact for the zero added classical noise, whereas the nonzero case has not  been studied in detail. The presented analysis fills the gap for three canonical classes of one-mode Gaussian channels: the lossy, amplifying and  additive classical noise channel class. For the remaining one-mode Gaussian channel classes the coherent information is known to vanish.
\end{abstract}

\keywords{Coherent information, Quantum capacity, One-mode bosonic Gaussian channels}

\author{Kamil Br\'adler}
\email{kbradler\at ap.smu.ca}
\address{
    Department of Astronomy and Physics,
    Saint Mary's University,
    Halifax, Nova Scotia, B3H 3C3, Canada
    }

\maketitle

\section{Introduction}

Quantum capacity of a noisy quantum channel $\G$ is the highest rate at which quantum information can be sent down the channel and faithfully recovered from the channel's output. The quantum capacity is given by the formula~\cite{devetak2005capacity,shor2002quantum,lloyd1997capacity}
\begin{equation}\label{eq:QuantCap}
Q(\G)=\lim_{n\to\infty}{1\over n}Q^{(1)}(\G^{\otimes n}),
\end{equation}
where $I_{\rm coh}(\G)\equiv Q^{(1)}(\G)=\sup_{\vr}I(A\rangle B)_\s$ is the one-shot coherent information~\cite{barnum1998information}. The supremum is taken over all input states $\vr$ to the channel $\G:\vr_A\mapsto\s_B$ and $I(A\rangle B)_\s\df H(B)_\s-H(BR)_\s$ is the \emph{non-optimized coherent information}~\cite{holevo2012quantum} where $\Tr{R}\s_{BR}=\s_B$. The von Neumann entropy $H(B)_\s$ is evaluated on the $B$ subsystem of $\s_{BR}$ and  $R$ denotes a purifying system of the input state $\vr_A$.  This is the best characterization of quantum communication capabilities of a quantum channel as of today and the situation is not entirely satisfactory. The need for regularization in Eq.~(\ref{eq:QuantCap}) has its origin in the fact that $Q^{(1)}(\G^{\otimes 2})\neq2 Q^{(1)}(\G)$ -- the coherent information is known to be non-additive~\cite{divincenzo1998quantum} for a general quantum channel $\G$. This ultimately means that $Q(\G)\neq Q^{(1)}(\G)$. Because of these problems it may happen that  a better characterization of quantum communication capabilities of a quantum channel will be developed which is not based on the coherent information. Until then, the coherent information is a useful lower bound but there are situations where even to calculate the coherent information can  be a challenge. One of them is the case of bosonic Gaussian channels. Bosonic Gaussian channels transform Gaussian optical states (such as a single-mode squeezed state) into Gaussian states and the name comes from a Gaussian profile of a probability function in phase space. Gaussian states and maps essentially describe the physics of a quantum harmonic oscillator as an example of an elementary infinite-dimensional quantum system. There are many reasons to study Gaussian states and channels. The whole Gaussian framework is simple enough to be  accessible to an analytical treatment and Gaussian states are relatively easily prepared and manipulated in a laboratory. At the same time, Gaussian channels show a number of complex phenomena such as superadditivity of the quantum capacity~\cite{smith2011quantum} and other effects~\cite{weedbrook2012gaussian}.

To make matters  more simple it is instructive to start with the one-mode Gaussian transformations. These maps correspond to the action of one-mode Gaussian (OMG) channels on (one-mode) Gaussian states whose complete classification has been  accomplished~\cite{holevo2007one}. Probably the second main reason behind the classification of OMG channels (the first one being the classification itself) is the ability to quantify the channels' quantum communication capabilities in terms of  channel capacities. Recall that they represent natural quantum optical processes such as the phase-insensitive amplifier or unbalanced beam-splitter~\cite{weedbrook2012gaussian}. If nothing else, the knowledge of how reliably they transmit quantum information can be of a great practical use. The question of the classical capacity of a OMG channel has been elucidated~\cite{giovannetti2013ultimate} (at least in the most important case of phase-insensitive OMG channels, see also~\cite{schafer2013equivalence,pilyavets2012methods}) and it has been shown that under reasonable conditions, the Holevo quantity is an additive quantity  for a broad class of OMG channels while using Gaussian codes~\cite{gioCMP}.

For the quantum capacity the situation is less encouraging~\cite{holevo2001evaluating,cerf2007quantum,wolf2007quantum,holevo2012quantum}. First of all, the quantum capacity is known only in two cases: (i) for so-called antidegradable channels~\cite{caruso2006one} where the capacity is zero and (ii) for a tiny fraction of OMG channels known to be degradable~\cite{devetakshor2005capacity}, where the capacity is nonzero and calculable~\cite{wolf2007quantum,caruso2006degradability,holevo2012quantum}. For the rest of OMG channels the generic computable lower bound is the one-shot coherent information as previously mentioned, where Gaussian states encode quantum information. But here is the catch -- unlike the classical capacity case, much less is known about whether the Gaussian codes provide the highest transmission rate~\cite{wolf2007quantum}. By restricting to the Gaussian codes, as we do in this work, the calculation of the one-shot coherent information consists of an optimization step over the unconstrained input mean photon number. As it turns out, only in a certain (very special) case the  coherent information is manifestly maximized when the input signal power goes to infinity (leading to a finite value of the coherent information)~\cite{holevo2012quantum}. In more detail, every OMG channel can be purified by a two-mode Gaussian unitary whose reference (environment) mode need not be  a pure state but rather  a mixed Gaussian state. This will have an effect on what OMG channel can be obtained by tracing over the reference mode. Ignoring it would result in an incomplete classification and missing important examples of OMG channels. So this is the key point. The coherent information can be relatively easily optimized if the environment contains no added classical noise. But the task becomes considerably harder in a general case.

Here we resolve this issue and show that if the environment contains added classical noise then whenever the coherent information is positive it is maximized for the infinite input signal power. Otherwise it equals zero as the only non-negative value for the coherent information and it happens if the input power is zero. The proof is greatly facilitated by an interesting identity (see Eq.~(\ref{eq:RHSderwrtKbounded}) for the lossy case and Eq.~(\ref{eq:interestingIdentityAmp}) for the amplifying case)  involving expressions derived from the von Neumann entropy for bosonic Gaussian systems  (see Eqs.~(\ref{eq:Fh}), (\ref{eq:fourPapas}) and (\ref{eq:fourPapasAmp}). The identity could be useful in solving other problems involving the capacities of Gaussian channels.

Note that from the text below Eq.~(5.8) in the seminal Ref.~\cite{holevo2001evaluating} it seems that the problem is  trivial. The authors claim that the coherent information is always an increasing function of the input power including arbitrary additive classical noise. However, a quick glance at Fig.~\ref{fig:Gexample} shows that this is not true. A more correct statement can be found in~\cite{holevo2012quantum}, where we can read below Eq.~(12.188) that the very same coherent information is a complicated function of the input power (see also the last paragraph of~\cite{holevo2007one}). We will show in this paper  that the in spite of the non-monotonic behavior of the coherent information as a function of the input power $N$, its supremum over all OMG input states is achieved in the limit of an infinite input power (such as the grey curve in Fig.~\ref{fig:Gexample}) or the optimized coherent information is zero (see the green and blue examples for $N=0$). This finding thus justifies the quantum capacity formula (5.9) from~\cite{holevo2001evaluating}. It is perhaps an expected result but the author is not aware of its  proof in any form.

\begin{figure}[t]
  \resizebox{10cm}{!}{\includegraphics{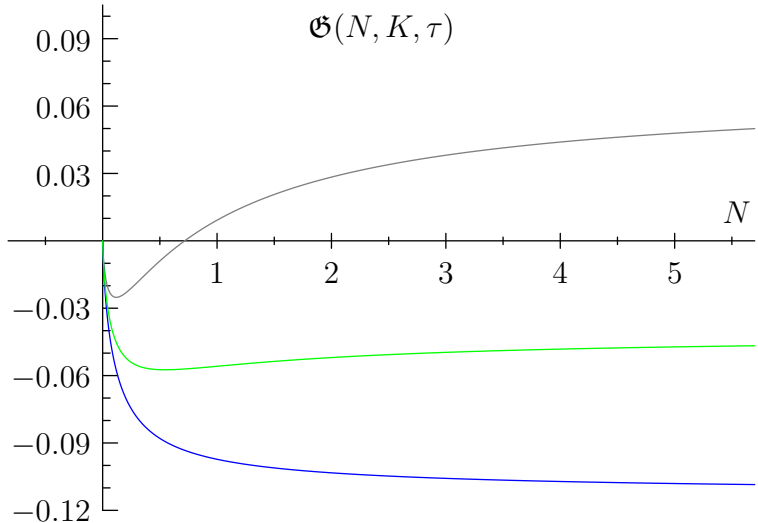}}
   \caption{The non-optimized coherent information defined in Eq.~(\ref{eq:Icoh}) is plotted to illustrate its non-trivial dependence on the input signal power $N$. We depict a lossy Gaussian channel $\G$ from the class $\C(\mathrm{loss})$ with the transmissivity parameter $\tau=2/3$ and the added classical noise parameter $K\geq0$ ($K=1/12$ for the grey curve, $K=7/65$ for the green curve and $K=1/8$ for the blue curve). For their definition, see Section~\ref{sec:main}.}
    \label{fig:Gexample}
\end{figure}

The question of optimization of coherent  information arose while working on the problem of quantum information  encoded in Gaussian states and its flow in the vicinity of a non-rotating black hole surrounded by a potential barrier~\cite{bradler2014black}. There, we have shown that for a certain region of the black hole parameter space, where the quantum capacity cannot be calculated, the coherent information  is zero in the limit of an infinite input power. This may naively imply that an observer at future infinity will not be able to reconstruct the quantum information swallowed by such a black hole. This would be an unwelcome conclusion as there exists a common belief (shared by the authors of~\cite{bradler2014black}) that the information is not lost after the black hole has evaporated. But the whole problem may become a non-issue by either showing that (i) the region in question is physically irrelevant or (ii) that the coherent information could be superadditive leading to nonzero capacity or (iii) the one-shot quantum capacity (the coherent information) reaches its (nonzero) maximum for a finite input signal power. As a result of this paper, the third option turn out not to be the case.

\section{OMG channels and their coherent information}
\label{sec:main}

Gaussian completely positive maps (Gaussian channels) transform  Gaussian states among themselves.  Gaussian $n$-mode states are fully characterized  by a $2n$-dimensional displacement vector $\mathbf{d}$ and a $2n\times2n$ real covariance matrix $\mathbf{V}$. Hence, Gaussian channels act on these two quantities and it turns out that the transformation is $\mathbf{d}\mapsto\mathbf{Td}+\boldsymbol{\d}$ and $\mathbf{V}\mapsto \mathbf{T}\mathbf{V}\mathbf{T}^T+\mathbf{N}$ subject to a complete positivity condition on real matrices $\mathbf{T}$ and $\mathbf{N}=\mathbf{N}^T$~\cite{holevo2001evaluating}. For the simplest class of one-mode Gaussian channels investigated here the complete positivity condition reduces to
\begin{equation}\label{eq:CPcondition}
  y\geq|\tau-1|,\hspace{5mm}\mathbf{N}\geq0,
\end{equation}
where $y\df\sqrt{\det{\mathbf{N}}}$ and $\tau\df\det{\mathbf{T}}$. OMG channels have been shown to fall into several equivalence classes~\cite{holevo2007one}, where each class is represented by $\boldsymbol{\d}$ (that can be set to zero for our purposes) and matrices $\mathbf{T}$ and $\mathbf{N}$ being diagonal and characterized by three parameters. Adding to the already introduced parameter $\tau$, the remaining parameters are $r\df\min{[\rank{\mathbf{T}},\rank{\mathbf{N}}]}$ and $\ave{n}$ as an average number of thermal photons injected to the reference mode of the purifying symplectic  transformation of an OMG channel.

The OMG channels studied here are all $r=2$ channels sometimes called  phase-insensitive OMG channels. It is the lossy channel class $\C(\mathrm{loss})$ defined for $0<\tau<1$, amplifying channel class $\C(\mathrm{amp})$ defined for $\tau>1$ and the so-called additive classical noise channel class $\B_2$ as a limiting case $\tau\to1$ of the previous two classes. Their equivalence representatives are
\begin{subequations}\label{eq:eqReps}
  \begin{align}
    (\mathbf{T},\mathbf{N})_{\C({\mathrm{loss}})} & = \big(\sqrt{\tau}\id,\ (1-\tau)(2\ave{n}+1)\id\big),\label{eq:canForms1a}\\
    (\mathbf{T},\mathbf{N})_{\C({\mathrm{amp}})} & = \big(\sqrt{\tau}\id,\ (\tau-1)(2\ave{n}+1)\id\big),\\
    (\mathbf{T},\mathbf{N})_{\B_2} & = \big(\id,\ave{n}\id\big),
  \end{align}
\end{subequations}
where $\id$ stands for an identity matrix. The fourth class of rank two OMG channels is conjugate amplifying channel class $\D$ whose quantum capacity (and so coherent information) is zero. Therefore, this class is useless for reliable transmission of quantum information~\cite{holevo2001evaluating,holevo2007one,caruso2006one}.
For comprehensive reviews of OMG channels we refer the reader to~\cite{holevo2012quantum,weedbrook2012gaussian}. Sch\"afer's thesis~\cite{schafer2013} possibly contains the most detailed account of OMG channels focused on the classical capacity problem.

Let's review the relevant entropic quantities first introduced in~\cite{holevo2001evaluating} with a slightly modified nomenclature. We first recall the expression for the optimized coherent information of a Gaussian channel $\G$ and introduce the \emph{non-optimized coherent information} $\GG(N,K,\tau)$:
\begin{equation}\label{eq:Icoh}
  I_{\rm coh}(\G)\df\sup_{N}\GG(N,K,\tau)=\sup_{N}\big[g(\eta)-g(f)-g(\ell)\big],
\end{equation}
where $N=\Tr{}[\vr\;a^\dg a]$ is the mean photon number of an input Gaussian state~$\vr$, $K\df|\tau-1|\ave{n}$ (which is valid for $\C(\mathrm{amp})$ and $\C(\mathrm{loss})$) leading to
$$
K=1/2(y-|\tau-1|),
$$
as follows from the definition of $y$ and Eqs.~(\ref{eq:eqReps}). This is the parameter representing the classical noise  added to the channel's environment. The previously introduced parameter $\tau$ modulates the loss or gain of the corresponding optical element (an unbalanced beam-splitter for $\C(\mathrm{loss})$ or a parametric amplifier for $\C(\mathrm{amp})$) and $g(x)$ is the von Neumann entropy~\cite{agarwal}
\begin{equation}\label{eq:g}
g(x)\df(1+x)\log{[1+x]}-x\log{x}\;.
\end{equation}
Modifying the notation from~\cite{holevo2001evaluating} (see also~\cite{holevo2012quantum}), we then
have
\begin{subequations}\label{eq:cohInfoPars}
\begin{align}\label{eq:cohInfoParsA}
  \eta & = \begin{cases}
        \tau N+K \quad&{\rm for}\ 0<\tau<1\\
        \tau N+\tau-1+K \quad&{\rm for}\ \tau>1,
  \end{cases}\\
  f & = {1\over2}(p+\eta-N-1),\label{eq:cohInfoParsB}\\
  \ell & = {1\over2}(p-\eta+N-1).\label{eq:cohInfoParsC}
\end{align}
\end{subequations}
We must not forget to introduce
\begin{equation}\label{eq:pakaD}
 p=\big[(N+\eta+1)^2-\tau\,4N(N+1)\big]^{1/2}
\end{equation}
and for future reference we also define
\begin{equation}
q=\begin{cases}\label{eq:q}
         K+N+2KN-N\tau &{\rm for}\ 0<\tau<1\\
         K+2KN+N(-1+\tau)+\tau  &{\rm for}\ \tau>1.
  \end{cases}
\end{equation}

\begin{thm}
    Let $\GG(N,K,\tau)=g(\eta)-g(\ell)-g(f)$ be the non-optimized coherent information of a Gaussian channel $\G\in\{\C(\mathrm{loss}),\C(\mathrm{amp})\}$. Then either
    \begin{equation}\label{eq:main}
        \sup_{N}\GG(N,K,\tau)=\lim_{N\to\infty}\GG(N,K,\tau)>0
    \end{equation}
    or
    \begin{equation}\label{eq:mainTriv}
        \sup_{N}\GG(N,K,\tau)=\lim_{N\to0}\GG(N,K,\tau)\equiv0
    \end{equation}  
    holds for all $K\geq0$ and $1/2<\tau<1$ (for $\G\in\C(\mathrm{loss})$) or $\tau>1$ (for $\G\in\C(\mathrm{amp})$).
\end{thm}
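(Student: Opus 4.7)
The plan is to reduce the theorem to a boundary analysis plus a controlled sign of the interior extrema. First I would compute the boundary values of $\GG(\cdot,K,\tau)$. At $N=0$, substituting into Eqs.~(\ref{eq:cohInfoPars})--(\ref{eq:pakaD}) gives $p=\eta+1$, hence $f=\eta$ and $\ell=0$, so $\GG(0,K,\tau)=g(\eta)-g(\eta)-g(0)=0$, which already realizes Eq.~(\ref{eq:mainTriv}). At $N\to\infty$, expanding $p$ from Eq.~(\ref{eq:pakaD}) gives $p=|1-\tau|N+O(1)$, so that $\ell\to\infty$ while $f$ converges to a finite limit $f_\infty(K,\tau)$; combined with $g(x)=\log x+1+O(1/x)$ one obtains a closed-form $L(K,\tau):=\lim_{N\to\infty}\GG(N,K,\tau)$ (for $\C(\mathrm{loss})$ this yields $L=\log[\tau/(1-\tau)]-g(K/(1-\tau))$, with an analogous formula in the $\C(\mathrm{amp})$ case), and the set of $(K,\tau)$ on which $L>0$ is thereby explicit.

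Since $\GG(0,K,\tau)=0$ and $\GG(\cdot,K,\tau)\to L(K,\tau)$, the theorem reduces to the dichotomy: (i) if $L(K,\tau)>0$ then $\GG(N,K,\tau)<L(K,\tau)$ for every finite $N\geq 0$, and (ii) if $L(K,\tau)\leq 0$ then $\GG(N,K,\tau)\leq 0$ for every $N\geq 0$. Both follow once every interior critical value of $\GG(\cdot,K,\tau)$ is shown to lie at or below $L(K,\tau)$. Direct differentiation gives
\[
  \partial_N\GG=\tau\,g'(\eta)-\tfrac{1}{2}(p_N+\tau-1)\,g'(f)-\tfrac{1}{2}(p_N-\tau+1)\,g'(\ell),
\]
with $g'(x)=\log(1+1/x)$ and $p_N:=\partial_Np$ a rational function of $N,K,\tau$; this combination of logarithms of algebraic functions of $N$ does not admit a direct sign analysis.

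To tame this expression I would invoke the identity flagged in the introduction, Eq.~(\ref{eq:RHSderwrtKbounded}) for $\C(\mathrm{loss})$ and Eq.~(\ref{eq:interestingIdentityAmp}) for $\C(\mathrm{amp})$. I expect the identity to rewrite $\partial_N\GG$ (or an equivalent combination such as $\partial_K\GG$) so that the $N$-dependence factors cleanly; the critical-point equation then becomes algebraic, and a comparison with the limit value $L(K,\tau)$ forces every interior critical value to lie below $L(K,\tau)$. Combined with the boundary values above, this yields both (i) and (ii).

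The main obstacle is exactly this sign analysis at interior critical points. As Fig.~\ref{fig:Gexample} illustrates, $\GG(\cdot,K,\tau)$ is genuinely non-monotone in $N$, so a plain monotonicity argument is unavailable; one must rule out interior local maxima that exceed $L(K,\tau)$. That is precisely what the identity is engineered to do, but verifying it and then applying it consistently in both sign regimes of $L(K,\tau)$ and for the two classes $\C(\mathrm{loss})$ and $\C(\mathrm{amp})$ separately (the asymptotic form of $p$ differs between them, and the restriction $1/2<\tau<1$ vs.~$\tau>1$ enters nontrivially) is where the bulk of the technical work will lie.
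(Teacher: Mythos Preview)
Your boundary computations are correct and coincide with the paper's Lemmas~\ref{lem:limitsNinfty} and~\ref{lem:limitsNinftyAmp}, and your reduction to ``every interior critical value lies at or below $L(K,\tau)$'' is logically sound.

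The gap is in the mechanism you propose for controlling those critical points. The identity in Eq.~(\ref{eq:RHSderwrtKbounded}) (resp.~Eq.~(\ref{eq:interestingIdentityAmp})) does \emph{not} algebraicize the stationary equation $\partial_N\GG=0$; that equation stays transcendental throughout, and no direct comparison of a critical value with $L(K,\tau)$ is ever made. What the paper actually does is write $\partial_N\GG=0$ as $\mathrm{LHS}(N,K,\tau)=\mathrm{RHS}(N,K,\tau)$ (Eq.~(\ref{eq:statCondOnN})) and then differentiate both sides with respect to $K$. The $K$-derivative of the LHS is trivially negative (Eq.~(\ref{eq:LHSderwrtK})); the role of the identity is precisely to show that the $K$-derivative of the RHS is nonnegative. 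Together with the $K=0$ inequality (Eq.~(\ref{eq:StatInequality})) and the $K\to\infty$ limits (Eqs.~(\ref{eq:threeKinftyLims})), this $K$-monotonicity argument is what yields Propositions~\ref{prop:GBehavior} and~\ref{prop:GBehaviorAmp}: $\GG(\cdot,K,\tau)$ has \emph{at most one} stationary point in $N\in(0,\infty)$.

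The second ingredient you are missing is the behavior of $\partial_N\GG$ at the left endpoint: for every $K>0$ one has $\lim_{N\to 0^+}\partial_N\GG=-\infty$ (Eqs.~(\ref{eq:limitdifference}) and~(\ref{eq:limitdifferenceAmp})), so $\GG$ leaves $N=0$ by \emph{decreasing}. Combined with ``at most one stationary point'', this forces that stationary point, if it exists, to be a global minimum with negative value; the supremum is then automatically $\max\{0,L(K,\tau)\}$, which is the dichotomy. Your plan of bounding critical values by $L(K,\tau)$ directly is never carried out and, absent the $K$-monotonicity route, you have no concrete tool with which to do it.
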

We present a few simple auxiliary lemmas in order to prove the theorem. We start with the lossy channel class $\C(\mathrm{loss})$. 

Note that the following result already appeared in~\cite{holevo2001evaluating} and we present it here for the sake of completeness.
\begin{lem}\label{lem:limitsNinfty}
    Assume $0<\tau<1$ and $K\geq0$. Then
  \begin{enumerate}[label=(\roman*)]
    \item
    \begin{equation}\label{eq:Ninfty1}
          \lim_{N\to\infty}g(f)=-{K\over1-\tau}\log{K\over1-\tau}+{1-\tau+K\over1-\tau}\log{1-\tau+K\over1-\tau},
    \end{equation}
      \item
    \begin{equation}\label{eq:Ninfty2}
         \lim_{N\to\infty}[g(\eta)-g(\ell)]=\log{\tau\over1-\tau}.
    \end{equation}
  \end{enumerate}
\end{lem}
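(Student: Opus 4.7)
The plan is to reduce everything to an asymptotic expansion of $p$ in $N$, and then read off the claimed limits directly from the formulas for $f$ and $\ell$, together with the standard asymptotics $g(x)=\log x+1+o(1)$ as $x\to\infty$ (which follows from $(1+x)\log(1+x)-x\log x=x\log(1+1/x)+\log(1+x)$).

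First I would substitute $\eta=\tau N+K$ into the defining identity for $p$ in Eq.~(\ref{eq:pakaD}) and expand the square. After cancellation the leading $N^2$ coefficient becomes $(1+\tau)^2-4\tau=(1-\tau)^2$, so
\begin{equation*}
    p^2=(1-\tau)^2 N^2+2\bigl[K(1+\tau)+(1-\tau)\bigr]N+(K+1)^2 .
\end{equation*}
Since $0<\tau<1$, I may factor out $(1-\tau)^2 N^2$ under the square root and use $\sqrt{1+u}=1+u/2+O(u^2)$ to obtain the two-term expansion
\begin{equation*}
    p=(1-\tau)N+\frac{K(1+\tau)}{1-\tau}+1+O(1/N),
\end{equation*}
which is the only nontrivial computation in the proof. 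This is the step I expect to be slightly error-prone, but it is purely algebraic.

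For part (i), I plug this expansion into Eq.~(\ref{eq:cohInfoParsB}): the $(1-\tau)N$ and $-(1-\tau)N$ pieces cancel, the $+1$ and $-1$ pieces cancel, and the constant survives, giving
\begin{equation*}
    2f=p-(1-\tau)N+K-1\longrightarrow \frac{K(1+\tau)}{1-\tau}+K=\frac{2K}{1-\tau}.
\end{equation*}
Hence $f\to K/(1-\tau)$, and by continuity of $g$ on $(0,\infty)$ (and continuity at $0$ interpreted as $g(0)=0$ for $K=0$) I recover Eq.~(\ref{eq:Ninfty1}) directly from the definition of $g$ in Eq.~(\ref{eq:g}).

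For part (ii), the same expansion applied to Eq.~(\ref{eq:cohInfoParsC}) gives $\ell=(1-\tau)N+K\tau/(1-\tau)+O(1/N)$, while $\eta=\tau N+K$. Both arguments diverge, so I use the asymptotic $g(x)=\log x+1+o(1)$ to get
\begin{equation*}
    g(\eta)-g(\ell)=\log\frac{\eta}{\ell}+o(1)=\log\frac{\tau N+K}{(1-\tau)N+K\tau/(1-\tau)+O(1/N)}+o(1),
\end{equation*}
which tends to $\log\bigl(\tau/(1-\tau)\bigr)$, yielding Eq.~(\ref{eq:Ninfty2}). The only delicate point here is ensuring the $o(1)$ error in the expansion of $g$ is uniform enough to survive the subtraction; this is immediate because both $\eta$ and $\ell$ grow linearly in $N$, so the $O(1/x)$ remainder in $g(x)=\log x+1+O(1/x)$ is $O(1/N)$ for each.
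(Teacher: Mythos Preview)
Your proof is correct and follows essentially the same route as the paper: both hinge on the large-$N$ expansion of $p$ (the paper does it after the substitution $N\mapsto 1/N$ and expanding near $0$, you do it directly), yielding $f\to K/(1-\tau)$ and $\ell=(1-\tau)N+K\tau/(1-\tau)+O(1/N)$. For part~(ii) your use of the asymptotic $g(x)=\log x+1+O(1/x)$ to pass directly to $\log(\eta/\ell)$ is a mild streamlining of the paper's version, which instead expands $g(\eta)$ and $g(\ell)$ separately and exhibits the cancellation of the divergent $-\log N$ terms by hand; the content is the same.
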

\begin{proof}
  (i) Just for comfort we set $N\to1/N$ and expand $f$ around $N=0$ yielding $f={K\over1-\tau}+\O(N)$. By using
  \begin{equation}\label{eq:logExpansion}
    \log{\bigg[\sum_{k=0}^\infty a_k N^k\bigg]}=a_0+{a_1\over a_0}N+\O(N^2)
  \end{equation}
  and Eq.~(\ref{eq:g}) we immediately obtain~(\ref{eq:Ninfty1}) (for $N\to0$).\\
  (ii) In this limit both summands diverge for $N\to0$  (recall our transformation $N\to1/N$) but the infinities conveniently cancel. First, in a manner similar to the previous calculation, we obtain
  \begin{equation}\label{eq:Nprime}
    \lim_{N\to\infty}g(\eta)=\lim_{N\to0}\Big[1+\log{\tau}-\log{N}+\O(N)\Big].
  \end{equation}
  Before taking the limit we expand $\ell$ around zero:
  $$
  \ell={1-\tau\over N}+{K\tau\over1-\tau}+\O(N)
  $$
  and by using
  \begin{equation}\label{eq:logExpansion1}
    \log{\bigg[\sum_{k=-1}^\infty a_k N^k\bigg]}=\log{a_{-1}}-\log{N}+{a_0\over a_{-1}}N+\O(N^2)
  \end{equation}
  we find
  \begin{equation}\label{eq:ExpOfLog1}
    \log{\ell}=\log{[1-\tau]}-\log{N}+{NK\tau\over(1-\tau)^2}+\O(N^2).
  \end{equation}
  After the similar procedure for $1+\ell$ we find
  \begin{equation}
    \lim_{N\to\infty}g(\ell)  =  \lim_{N\to0}[1-\log{N}+\log{[1-\tau]}+\O(N)].
  \end{equation}
  By subtracting it from Eq.~(\ref{eq:Nprime}) we can finally take the limit and obtain the RHS of Eq.~(\ref{eq:Ninfty2}).
\end{proof}
\begin{lem}\label{lem:logIneq}
        Let $y\geq x>0$. Then
  \begin{enumerate}[label=(\roman*)]
    \item
    \begin{equation}\label{eq:logineq2}
        \log{x+1\over x}>{2\over2x+1}>{1\over x+1},
    \end{equation}
    \item
    \begin{equation}\label{eq:logineq3}
        y\log{y+1\over y}\geq x\log{x+1\over x}.
    \end{equation}
  \end{enumerate}
\end{lem}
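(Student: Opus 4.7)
The plan is to handle part (i) first, then derive part (ii) as a clean corollary by differentiating $h(x)=x\log\frac{x+1}{x}$ and invoking (i).

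For part (i), the right inequality $\frac{2}{2x+1}>\frac{1}{x+1}$ is immediate from cross-multiplying: it reduces to $2x+2>2x+1$. For the left inequality $\log\frac{x+1}{x}>\frac{2}{2x+1}$ I would use the substitution $u=\frac{1}{2x+1}\in(0,1)$, under which $\frac{x+1}{x}=\frac{1+u}{1-u}$, so that
\begin{equation*}
\log\frac{x+1}{x}=\log\frac{1+u}{1-u}=2\sum_{k=0}^{\infty}\frac{u^{2k+1}}{2k+1}>2u=\frac{2}{2x+1},
\end{equation*}
where the strict inequality follows by dropping all terms with $k\geq 1$ (which are strictly positive since $u>0$). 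An equivalent, more geometric route is to write $\log\frac{x+1}{x}=\int_{x}^{x+1}\frac{dt}{t}$ and observe that the midpoint rule underestimates the integral of the strictly convex function $1/t$, giving the same inequality directly.

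For part (ii), I would show that the function $h(x)\df x\log\frac{x+1}{x}$ is strictly increasing on $(0,\infty)$. Differentiating,
\begin{equation*}
h'(x)=\log\frac{x+1}{x}+x\left(\frac{1}{x+1}-\frac{1}{x}\right)=\log\frac{x+1}{x}-\frac{1}{x+1}.
\end{equation*}
By the transitive consequence of part (i), namely $\log\frac{x+1}{x}>\frac{1}{x+1}$, we have $h'(x)>0$ for all $x>0$, so $h$ is strictly increasing and $h(y)\geq h(x)$ whenever $y\geq x$, which is precisely (\ref{eq:logineq3}).

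The main obstacle, such as it is, lies in choosing the cleanest justification of the left inequality in (i); everything else is bookkeeping. The series expansion after the Möbius substitution $u=1/(2x+1)$ seems the most painless, and it has the conceptual advantage of making explicit that the gap between $\log\frac{x+1}{x}$ and $\frac{2}{2x+1}$ is $O(u^3)=O(x^{-3})$ for large $x$, which may be useful elsewhere in the paper's asymptotic analysis.
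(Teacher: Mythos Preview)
Your proof is correct and, for part~(ii), identical to the paper's: both compute $h'(x)=\log\frac{x+1}{x}-\frac{1}{x+1}$ and invoke the composite inequality from part~(i). For the left inequality in part~(i), the paper simply cites the logarithmic mean inequality $\frac{z-x}{\log z-\log x}<\frac{x+z}{2}$ with $z=x+1$, whereas you prove it from scratch---either via the $\tanh^{-1}$ series after the substitution $u=1/(2x+1)$, or via the midpoint rule for $\int_x^{x+1}\frac{dt}{t}$ applied to the convex integrand $1/t$. The midpoint-rule argument is in fact the standard proof of the logarithmic mean inequality (it is the left half of the Hermite--Hadamard inequality for $1/t$), so that route is the paper's citation unpacked; the series argument is a genuinely independent and equally clean alternative, with the side benefit you note of quantifying the gap as $O(x^{-3})$.
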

\begin{proof}
  (i) The rightmost inequality is a standard lower bound for the logarithm function. The first, tighter, inequality  is a special case of the logarithmic mean inequality~\cite{bullen1998dictionary}
  $$
  {z-x\over\log{z}-\log{x}}<{x+z\over2}
  $$
  valid for $z>x>0$, where we set $z=x+1$. 
  (ii) The function $f(x)=x\log{x+1\over x}$ is increasing for $x\in(0,\infty)$ since $f'(x)=\log{x+1\over x}-{1\over x+1}>0$ on this interval. This inequality follows from the second inequality in~(\ref{eq:logineq2}).
\end{proof}
\begin{lem}\label{lem:Npgreaterxm}
  Assume $1/2\leq\tau<1$ and $K\geq0$. Then $\eta\geq\ell$ holds whenever $N\geq0$.
\end{lem}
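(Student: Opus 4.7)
The plan is to reduce the inequality $\eta\geq\ell$ to an algebraic inequality in $N,K,\tau$ and then recast it as a sum of manifestly non-negative terms, with the hypothesis $\tau\geq1/2$ entering through the factor $(2\tau-1)$.

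First I would use the definition $\ell=\frac{1}{2}(p-\eta+N-1)$ from Eq.~(\ref{eq:cohInfoParsC}) to rewrite $\eta\geq\ell$ as
\[
    3\eta+1-N\geq p.
\]
Both sides are non-negative: $p\geq 0$ by definition, and substituting $\eta=\tau N+K$ gives $3\eta+1-N=(3\tau-1)N+3K+1\geq 1$ since $\tau\geq 1/2$ and $K\geq 0$. Hence the inequality is equivalent to the squared version $(3\eta+1-N)^2\geq p^2$.

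Next I would plug in $p^2=(N+\eta+1)^2-4\tau N(N+1)$ from Eq.~(\ref{eq:pakaD}) and exploit the telescoping difference of squares
\[
    (3\eta+1-N)^2-(N+\eta+1)^2=(4\eta+2)\,(2\eta-2N)=4(2\eta+1)(\eta-N),
\]
so the task reduces to verifying
\[
    (2\eta+1)(\eta-N)+\tau N(N+1)\geq 0.
\]

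The decisive step is then a regrouping that exposes the factor $2\tau-1$. Using $\eta-N=K-(1-\tau)N$ and collecting terms in $K$ and in $N$ separately, one can rewrite the left-hand side as
\[
    K\bigl[2N(2\tau-1)+2K+1\bigr]+(2\tau-1)\,N(\tau N+1),
\]
in which every factor is non-negative under the standing hypotheses $\tau\geq 1/2$, $K\geq 0$, $N\geq 0$. This finishes the proof. There is no real obstacle here — the only point that requires attention is tracking the algebra carefully enough to see the common factor $(2\tau-1)$ emerge, which is exactly why the threshold $\tau\geq 1/2$ appears in the statement of the lemma.
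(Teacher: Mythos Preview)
Your proof is correct. You reduce $\eta\geq\ell$ to $3\eta+1-N\geq p$, verify both sides are non-negative, square, and then regroup the resulting quadratic into a sum of manifestly non-negative pieces in which the factor $(2\tau-1)$ appears explicitly. I checked the algebra of the final identity
\[
(2\eta+1)(\eta-N)+\tau N(N+1)=K\bigl[2N(2\tau-1)+2K+1\bigr]+(2\tau-1)\,N(\tau N+1),
\]
and it is right.

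This is a genuinely different route from the paper's. The paper does not square; instead it bounds $p$ from above by two affine functions of $N$: the obvious bound $b_1=1+K+N+N\tau$ (valid because $p^2=b_1^2-4\tau N(N+1)$), and the tangent line $b_2$ to the graph of $p$ at $N\to\infty$ (valid because $p$ is shown to be concave in $N$). It then splits at $N=K/(1-\tau)$, using $b_1$ below and $b_2$ above that threshold, and checks the inequality in each case. Your argument is more elementary --- no concavity, no case split --- and makes the role of the threshold $\tau=1/2$ completely transparent through the common factor $(2\tau-1)$. The paper's approach, on the other hand, produces along the way the auxiliary bounds $b_1,b_2$ and the concavity of $p$, which could conceivably be reused elsewhere; your approach is tighter but yields only the lemma itself.
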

\begin{proof}
     In order to verify
     \begin{equation}\label{eq:NpMinusxm}
          \eta-\ell={1\over2}(1-N+3K+3N\tau-p)\geq0,
     \end{equation}
     where $p$ is given by Eq.~(\ref{eq:pakaD})
    \begin{equation}\label{eq:p}
         p=\big[(N+K+N\tau+1)^2-4\tau N(N+1)\big]^{1/2},
    \end{equation}
     we employ two convenient upper bounds on $p$. The first bound is $b_1=1+K+N+N\tau\geq p$ which follows from rewriting the above equation as
    \begin{equation}
      p = \big[b_1^2-4\tau N(N+1)\big]^{1/2}
    \end{equation}
     and the fact that $\tau,N,K\geq0$ and $p>0$. That itself follows (for $1/2\leq\tau\leq1$ and $K\geq0$) from yet another form of Eq.~(\ref{eq:p})   by inspection:
     \begin{equation}\label{eq:pAnother}
       p=\big[(1+K)^2+{2N} ({K} \tau + {K}- \tau +1)+{N}^2(1-\tau)^2\big]^{1/2}.
     \end{equation}
    The second bound is
    \begin{equation}\label{eq:b2upperbound}
         b_2=(1-\tau)N+{1+K-\tau+K\tau\over1-\tau}\geq p.
    \end{equation}
     To obtain the bound we first show that $p$ is concave as a function of $N>0$ for $K\geq0$ and $1/2\leq\tau<1$. It holds because
     $$
     \frac{\partial^2p}{\partial N^2}
     =-\frac{4 {K} \tau  ({K}-\tau +1)}{\big(({K}+{N} \tau +{N}+1)^2-4 \tau {N} ({N}+1)  \big)^{3/2}}=-{4 {K} \tau ({K}-\tau +1)\over p^3}<0.
     $$
     Then a tangent at any point $N>0$ is necessarily an upper bound. In particular, the tangent at $N\to\infty$ which happens to be the middle expression in Eq.~(\ref{eq:b2upperbound}) asymptotically approaches $p$ from above. So we get
     \begin{subequations}
       \begin{align}
          {1\over2}(1-N+3K+3N\tau-b_1) & =K+N\tau-N\geq0 \qquad\mbox{for}\quad0\leq N\leq{K\over1-\tau},\\
          {1\over2}(1-N+3K+3N\tau-b_2) & =(2\tau-1)\left({K\over\tau-1}+N\right)\geq0 \quad\mbox{for }N\geq{K\over1-\tau}
       \end{align}
     \end{subequations}
    for $1/2\leq\tau<1$.
\end{proof}
\begin{lem}\label{lem:uniformconv}
  Let $f(x)=\log{1+\a x\over\a x}, f_n(x)=\log{1+1/n+\a x\over 1/n+\a x}$ for $\a>0$ and  $\I^\e_x\df\{x\in\bbR;\e\leq x<\infty\}$ where $\e>0$. Then $f_n(x)\rightrightarrows f(x)$ on $\I^\e_x$ for all $\e>0$.
\end{lem}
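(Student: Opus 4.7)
The plan is to produce an explicit, $x$-independent majorant of $|f(x)-f_n(x)|$ on $\I^\e_x$ that vanishes as $n\to\infty$. The decisive feature of the hypothesis is that $\a x$ stays bounded away from zero, so the singular behaviour of $f$ at $x=0$ never enters the estimate.

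First I would combine the two logarithms into a single quotient. Setting $a\df\a x$ and $b\df1/n$ for brevity, a short cancellation of the factors $a$, $a^2$ and $ab$ common to numerator and denominator gives
\begin{equation*}
    f(x)-f_n(x)=\log\Bigl[1+\frac{b}{a(1+a+b)}\Bigr].
\end{equation*}
In particular this quantity is non-negative, so $f_n\leq f$ pointwise on $\I^\e_x$.

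Next, applying the elementary bound $\log(1+t)\leq t$ for $t\geq0$ and then discarding the positive terms $a$ and $ab$ from the denominator yields
\begin{equation*}
    0\leq f(x)-f_n(x)\leq\frac{b}{a^2}=\frac{1}{n\,\a^2 x^2}\leq\frac{1}{n\,\a^2\e^2}.
\end{equation*}
The final expression is independent of $x\in\I^\e_x$ and tends to zero as $n\to\infty$, which is precisely the claimed uniform convergence $f_n\rightrightarrows f$.

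No real obstacle is foreseen; both steps above are elementary algebra. The hypothesis $\e>0$ is, however, indispensable: on the full interval $(0,\infty)$ the limit $f(x)$ is unbounded as $x\to0^+$ while each $f_n$ is bounded by $\log(1+n)$, so uniform convergence must fail when $\e=0$.
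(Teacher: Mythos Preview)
Your argument is correct and in fact cleaner than the paper's own. The paper proceeds by calculus: it differentiates $g_n\df f-f_n$, locates the unique stationary point $x_{\mathrm{extreme}}=-\tfrac{1}{2\a}(1+1/n)$, observes that this is negative and hence outside $\I^\e_x$, so $g_n$ is monotone on $\I^\e_x$ and its supremum is attained at the left endpoint $x=\e$; pointwise convergence at that single point then finishes. Your route bypasses the derivative computation entirely: rewriting $f-f_n=\log\bigl(1+\tfrac{b}{a(1+a+b)}\bigr)$ and applying $\log(1+t)\leq t$ yields the explicit uniform bound $1/(n\a^2\e^2)$ in one line, with an $O(1/n)$ rate thrown in for free. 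The paper's extremum argument would extend to situations where $f-f_n$ changes sign, but for the present lemma your direct estimate is both shorter and more informative.
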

\begin{proof}
    $\sup_{x\in\I^\e_x}|f_n(x)-f(x)|=0$ is an equivalent criterion for the uniform convergence of $f_n$ to $f$. This is indeed satisfied by inspection and by defining $g_n=f-f_n$  we explicitly find $g_n'=0\Leftrightarrow x_{extreme}=-1/(2\a)(1+1/n)$. Thus
  \begin{equation}\label{eq:uniformconv}
    \lim_{n\to\infty}\sup_{x\in\I^\e_x}|f_n(x)-f(x)|= \lim_{n\to\infty}\log{{(1+1/n-2\a)(-1+1/n(-1+2\a))\over(1+1/n)^2(-1+2\a)}}=0
  \end{equation}
  as long as $\a\neq1/2$.
\end{proof}
\begin{prop}\label{prop:GBehavior}
    Let $\GG(N,K,\tau)=g(\eta)-g(\ell)-g(f)$  be the non-optimized coherent information. Then $\GG(N,K,\tau)$  has at most one  stationary point for $N\in(0,\infty)$ whenever $1/2\leq\tau<1$ and $K\geq0$.
\end{prop}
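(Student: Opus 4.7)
The plan is to show that every stationary point of $N\mapsto\GG(N,K,\tau)$ in $(0,\infty)$ is a strict local maximum. If that holds then two distinct stationary points would force a local minimum between them (where $\partial_N^2\GG\ge0$), contradicting the strict sign conclusion; hence there is at most one.

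The first step is to compute, using $g'(x)=\log{\frac{1+x}{x}}$,
\begin{equation*}
\partial_N\GG \,=\, \tau\log{\frac{1+\eta}{\eta}} \,-\, \ell'\log{\frac{1+\ell}{\ell}} \,-\, f'\log{\frac{1+f}{f}},
\end{equation*}
where $\ell'=(p'-\tau+1)/2$, $f'=(p'+\tau-1)/2$ and $p'=\partial_N p$ is read off from \eqref{eq:pakaD}; note in particular $f'+\ell'=p'$ and $f'-\ell'=\tau-1$. At any stationary point $N^*$ this gives a linear relation among the three logarithms, which I would use to eliminate one of them (say $\log\frac{1+\eta}{\eta}$) inside $\partial_N^2\GG(N^*)$. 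The surviving expression is a combination of $g''(\ell)=-1/[\ell(1+\ell)]$ and $g''(f)=-1/[f(1+f)]$, the two remaining log terms, and the second $N$-derivatives of $\eta,\ell,f,p$.

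The next step is to sign this residual. The tight inequalities of Lemma~\ref{lem:logIneq} convert the remaining log/rational combination into an algebraic one, while the ordering $\eta\ge\ell$ from Lemma~\ref{lem:Npgreaterxm} (this is precisely where $\tau\ge 1/2$ is consumed) fixes the signs of the differences that appear. The ``interesting identity'' flagged in the introduction (\eqref{eq:RHSderwrtKbounded} in the lossy case) is, I expect, the algebraic rewriting that performs exactly this reduction: it packages the $\partial_K$- or $\partial_N$-derivatives of the three-term combination into a manifestly bounded expression, after which Lemma~\ref{lem:logIneq} yields the strict negativity of $\partial_N^2\GG(N^*)$. Lemma~\ref{lem:uniformconv} would enter only to justify commuting an $N\to\infty$ limit with a derivative when reading off the boundary behaviour of $\partial_N\GG$, for instance to rule out stationary points escaping to infinity.

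The main obstacle is precisely this signed reduction of $\partial_N^2\GG$ at stationarity: the raw expressions for $p,p',p''$ are cumbersome, and the whole substance of the argument is to coax the three-entropy combination into a form whose sign can be read off from Lemma~\ref{lem:logIneq}. The hypothesis $\tau\ge 1/2$ appears unavoidable through Lemma~\ref{lem:Npgreaterxm}, and relaxing it would almost certainly require a separate case. A natural fallback, should the stationary-point route resist, is to attempt the stronger statement that $\partial_N\GG$ is monotonic on $(0,\infty)$ by signing $\partial_N^2\GG$ pointwise rather than only at its zeros; the same identity should drive this calculation as well.
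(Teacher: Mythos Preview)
Your proposal rests on a false premise: the stationary points of $N\mapsto\GG(N,K,\tau)$ are \emph{not} local maxima. For $K>0$ one has $\lim_{N\to0}\partial_N\GG=-\infty$ (this is Eq.~(\ref{eq:limitdifference}) in the paper), so $\GG$ is strictly decreasing near $N=0$; the first stationary point it meets is therefore a local minimum. The grey curve in Fig.~\ref{fig:Gexample} illustrates exactly this, and the proof of the main theorem says so explicitly (``a stationary point \dots\ is necessarily a global minimum''). Hence any attempt to prove $\partial_N^2\GG(N^*)<0$ at a stationary point $N^*$ would fail, and the two-max-forces-a-min argument collapses. One could try the mirror claim---every stationary point is a strict local minimum---but then you must also rule out inflection points where $\partial_N\GG$ touches zero without changing sign, and the algebra is no friendlier than what you already outline.

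The paper takes a genuinely different route. Instead of the second $N$-derivative, it differentiates the stationarity condition
\[
\partial_N\eta\,\log\tfrac{1+\eta}{\eta}\;=\;\partial_N f\,\log\tfrac{1+f}{f}+\partial_N\ell\,\log\tfrac{1+\ell}{\ell}
\]
with respect to the \emph{noise parameter} $K$, and shows that the left side is strictly decreasing in $K$ while the right side is increasing in $K$, for every fixed $N$. The identity you correctly flagged, Eq.~(\ref{eq:RHSderwrtKbounded}), is a statement about these $K$-derivatives, not about $N$-derivatives; it saturates the logarithmic bound of Lemma~\ref{lem:logIneq} exactly, which is what makes the monotonicity-in-$K$ argument go through. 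At $K=0$ one has LHS $>$ RHS for all $N$ (this is where Lemma~\ref{lem:Npgreaterxm}, and hence $\tau\ge1/2$, is used via inequality~(\ref{eq:logineq3})), while the $K\to\infty$ limits in Eqs.~(\ref{eq:threeKinftyLims}) give LHS $<$ RHS. The single-crossing in $K$ is then parlayed into the at-most-one-zero-in-$N$ conclusion. Lemma~\ref{lem:uniformconv} is used only to control the approach $K\to0^+$ on $\I^\e_N$, not to commute limits with $N$-derivatives at infinity.
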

\begin{proof}
  Stationary points are revealed by exploring $\papa{\GG}{N}=\papa{g(\eta)}{N}-\papa{g(\ell)}{N}-\papa{g(f)}{N}=0$ which is equivalent to
    \begin{equation}\label{eq:statCondOnN}
      \papa{\eta}{N}\log{1+\eta\over\eta}=\papa{f}{N}\log{1+f\over f}+\papa{\ell}{N}\log{1+\ell\over\ell}.
    \end{equation}
  Before we proceed to tackle the problem of how many times the above equality can be satisfied, we first take a look at the derivatives' behavior when $K=0$. In this case $f=0$, $\eta=N\tau$ and $\ell=N(1-\tau)$ and there is no stationary point since Eq.~(\ref{eq:statCondOnN}) becomes
  \begin{equation}\label{eq:StatInequality}
        \tau\log{1+N\tau\over N\tau}-(1-\tau)\log{1+N(1-\tau)\over N(1-\tau)}>0.
  \end{equation}
  The inequality holds for $1/2<\tau<1$ thanks to Lemma~\ref{lem:Npgreaterxm} and inequality~(\ref{eq:logineq3}). For $\tau=1/2$ we get $\GG(N,0,1/2)=0$.

  For $K>0$ the behavior of $\papa{g(\eta)}{N}$ radically changes for $N\gtrapprox 0$. By calculating the limits
    \begin{subequations}\label{eq:limits}
      \begin{align}
        \lim_{N\to0}\papa{g(\eta)}{N} & =\tau\log{1+K\over K},\label{eq:threeLimitsA} \\
        \lim_{N\to0}\papa{g(\ell)}{N} & = \infty, \label{eq:threeLimitsB}\\
        \lim_{N\to0}\papa{g(f)}{N} & = {K\tau\over1+K}\log{1+K\over K},\label{eq:threeLimitsC}
      \end{align}
    \end{subequations}
  we see that unlike for $K=0$, the expression $\lim_{N\to0}\papa{g(\eta)}{N}$  does not diverge. Hence
  \begin{equation}\label{eq:limitdifference}
    \lim_{N\to0}\papa{\GG(N,K,\tau)}{N}=-\infty
  \end{equation}
  whenever $K>0$. This is because of a ``jump'' from $+\infty$  when $K=0$ to a finite value for \emph{any} positive $K$  in Eq.~(\ref{eq:threeLimitsA}). But strictly speaking, the discontinuity occurs only for $N=0$ as can be seen from  $\papa{g(\eta)}{N}=\tau\log{1+K+N\tau\over K+N\tau}$ and this point is absent from $\I^\e_N=\I^\e_x$. By choosing $K$ arbitrarily small positive we must necessarily get $\papa{g(\eta)}{N}<\papa{g(\ell)}{N}+\papa{g(f)}{N}$ in an open neighborhood of $N=0$ (the non-negative complement of the set $\I^\e_N$). Therefore inequality Eq.~(\ref{eq:StatInequality}) obtained for $K=0$ gets ``immediately'' reversed but because $\GG$ is a continuous function of $K$ it is intuitively clear that for $K\gg0$ the curves will not be deformed much. More precisely, we may invoke Lemma~\ref{lem:uniformconv} showing uniform convergence of $\papa{g(\eta)}{N}$ on $\I^\e_N$ (set $x=N$, $n=1/K$ and $\a=\tau$). Since $\papa{g(\ell)}{N}+\papa{g(f)}{N}$  has no discontinuities for $N\geq0$ and $K\geq0$ then $\papa{\GG}{N}\big|_{K=0}$ is nicely approximated (i.e.~in the sense of the $\infty$-norm) by $\papa{\GG}{N}\big|_{K}$ on $\I^\e_N$ as $K\to0$. But this is not enough -- the uniform convergence does not inform us about a more detailed behavior. In particular, whether  the functions  intersect when approaching each other and how many times. The intuition suggests that the curves' order should be preserved for $K\gg0$ and so the number of intersections is odd and most likely just one.

  Let's put the intuition on a firm foundation. The main idea is based on an elementary calculus criterion for increasing/decreasing functions: If $w'(x)>0\ (w'(x)<0)$ for all $x\in(r,s)$, where the prime denotes differentiating, then $w(x)$ is increasing (decreasing) in $(r,s)$. Using this we will show that the LHS of Eq.~(\ref{eq:statCondOnN}) is a decreasing function of $K$  for all $N>0$ and $1/2<\tau<1$ while the RHS is increasing. Henceforth, they intersect just once. The LHS is easy to analyze:
  \begin{equation}\label{eq:LHSderwrtK}
    {\pa\over\pa K}\left[\papa{\eta}{N}\log{1+\eta\over\eta}\right]=-{\tau\over(K+N\tau)(1+K+N\tau)},
  \end{equation}
  which is negative for all investigated parameters. For the RHS we first observe
  \begin{equation}\label{eq:DerxmxpRelation}
    \papa{\ell}{N}=\papa{f}{N}+1-\tau.
  \end{equation}
  This allows us to write the RHS of Eq.~(\ref{eq:statCondOnN}) as
  \begin{align}
      \papa{f}{N}\left[\log{1+f\over f}+\log{1+\ell\over\ell}\right] +(1-\tau)\log{1+\ell\over\ell} & = h\log{1+F\over F}+(1-\tau)\log{1+\ell\over\ell},
  \end{align}
  where we defined
  \begin{subequations}\label{eq:Fh}
    \begin{align}\label{eq:F}
      F(N,K,\tau) & \df {f\ell\over f+\ell+1}={1+q-p\over2p},\\
      h(N,K,\tau) & \df \papa{f}{N}= {1\over2}\left(-1+\tau+{1+K+N+(-1+K-2N) \tau+N\tau^2\over p}\right).
    \end{align}
  \end{subequations}
  Eqs.~(\ref{eq:pakaD}) and (\ref{eq:q}) were used.   We differentiate w.r.t to $K$ and ask whether the following inequality holds:
  \begin{align}\label{eq:RHSderwrtK}
    {\pa\over\pa K}\bigg[h\log{1+F\over F}&+(1-\tau)\log{1+\ell\over\ell}\bigg] \nn \\
    &=\papa{h}{K}\log{1+F\over F}-h{\papa{F}{K}\over(1+F)F}-(1-\tau){\papa{\ell}{K}\over(1+\ell)\ell}\geq0.
  \end{align}
  To prove it we reshuffle things a bit by multiplying by $F$ and dividing by $\papa{h}{K}$. Both expressions are non-negative: $F$ is given by Eq.~(\ref{eq:F}) and we observe from the form of Eqs.~(\ref{eq:cohInfoParsB}) and (\ref{eq:cohInfoParsC})
    \begin{subequations}
    \begin{align}
      f & = {1\over2}(p-(1-\tau)N+K-1),\\
      \ell & = {1\over2}(p-(\tau-1)N-K-1).
    \end{align}
  \end{subequations}
  together with Eq.~(\ref{eq:pAnother}) that, because all the three summands in $p$ are always non-negative for $1/2\leq\tau\leq1$ and $K\geq0$, both $f$ and $\ell$ are manifestly non-negative too. From the first line of Eq.~(\ref{eq:q}) it also follows that $\papa{h}{K}$ given by Eq.~(\ref{eq:fourPapasC}) is non-negative.

  After the multiplications we use the first inequality (the tighter lower bound) in Eq.~(\ref{eq:logineq2}) to write:
  \begin{equation}\label{eq:RHSderwrtKbounded}
    F\log{1+F\over F}>{2F\over2F+1}\geq{h\over\papa{h}{K}}{\papa{F}{K}\over1+F}+(1-\tau){F\over\papa{h}{K}}{\papa{\ell}{K}\over(1+\ell)\ell}.
  \end{equation}
  Astonishingly, the second inequality is saturated. This is a non-trivial identity revealed only after a tedious calculation that we will not perform here. For the reader's convenience, we will list the remaining  expressions participating in~(\ref{eq:RHSderwrtKbounded}):
  \begin{subequations}\label{eq:fourPapas}
    \begin{align}
      {h\over\papa{h}{K}} & = {p^2\over2\tau(1+q)}\big(K(1+\tau)+(-1+\tau) (-1+N(-1+\tau)+p)\big),\\
      \papa{F}{K} & = {N(1+N)\over p^3}\big((-1+N(-1+\tau))(-1+\tau)+K(1+\tau)\big),\\
      \papa{h}{K} & ={\tau(q+1)\over p^3},\label{eq:fourPapasC}\\
      \papa{\ell}{K} & = {1+K+N+N\tau-p\over2p}.
    \end{align}
  \end{subequations}
  We conclude that the RHS of Eq.~(\ref{eq:statCondOnN}) is increasing as a function of $K$ for all $N$. Since $\papa{g(\eta)}{N}>\papa{g(\ell)}{N}$  holds for $K=0$ and $1/2<\tau<1$ (see Eq.~(\ref{eq:StatInequality})) and
  \begin{subequations}\label{eq:threeKinftyLims}
  \begin{align}
      \lim_{K\to\infty}\papa{g(\eta)}{N} & = 0, \label{eq:threeKinftyLimsA} \\
      \lim_{K\to\infty}\papa{g(\ell)}{N} & =\log{1+N\over N},  \label{eq:threeKinftyLimsB}\\
      \lim_{K\to\infty}\papa{g(f)}{N} & =0.  \label{eq:threeKinftyLimsC}
    \end{align}
 \end{subequations}
 we can see that both sides intersect exactly once.
\end{proof}
\begin{rem}
  We are not able to analytically solve Eq.~(\ref{eq:statCondOnN}) and find the intersection point (if we were able to do it we would likely not need the previous proposition). Just out of curiosity, numerical analysis suggests that as $N\to\infty$ the intersection point converges to a certain value as a function of $K$ and for a given $\tau$. This implies that there exists a threshold value $K_{th}>0$ where   if $K>K_{th}$ the function $\GG(N,K,\tau)$ is never positive. This is exemplified on the blue curve for a lossy OMG channel in Fig.~\ref{fig:Gexample}.
\end{rem}
\begin{rem}
   The proposition can also be applied for $0<\tau<1/2$ -- inequality Eq.~(\ref{eq:StatInequality}) gets reversed for $K=0$ leading to $\papa{g(\eta)}{N}<\papa{g(\ell)}{N}$  and the proved proposition informs us that the inequality will remain unchanged for $K>0$. For the illustration see the relevant part of the green region in Fig.~\ref{fig:cont} where the quantum capacity is known to be zero~\cite{caruso2006one} and so is the coherent information.
\end{rem}

Before we proceed with the proof of the main theorem we have to repeat the whole procedure for the class of amplifying quantum channels $\C(\mathrm{amp})$. The analysis is \emph{qualitatively}  similar  to the $\C(\mathrm{loss})$ case (including an intriguing identity \`{a} la the saturated second inequality in Eq.~(\ref{eq:RHSderwrtKbounded})) but it is different enough not to be omitted. We will be sketchy during some repetitive steps, though, such as the next lemma whose proof is nearly identical to the proof of Lemma~\ref{lem:limitsNinfty}. Recall that from this point onwards, our definition of $\eta$ becomes the second line of Eq.~(\ref{eq:cohInfoParsA}) and subsequently the remaining quantities in Eqs.~(\ref{eq:cohInfoPars}) and~(\ref{eq:pakaD}) change as well.
\begin{lem}\label{lem:limitsNinftyAmp}
  Assume $\tau>1$ and $K\geq0$. Then
  \begin{equation}\label{eq:NinftyGamp}
    \lim_{N\to\infty}\GG(N,K,\tau)= {K\over\tau-1}\log{K\over\tau-1}-{\tau-1+K\over\tau-1}\log{\tau-1+K\over\tau-1}
        +\log{\tau\over\tau-1}.
  \end{equation}
\end{lem}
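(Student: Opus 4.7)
The plan is to mirror the argument of Lemma~\ref{lem:limitsNinfty}, with the roles of $\ell$ and $f$ interchanged. As in that lemma, I would substitute $N\to 1/N$ and Taylor-expand $\eta$, $p$, $f$, $\ell$ around $N=0$, then apply the two logarithm expansions in Eqs.~(\ref{eq:logExpansion}) and (\ref{eq:logExpansion1}) to the three summands of $\GG$.

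First I would examine the leading behaviors. With $\eta=\tau N+\tau-1+K$ and $p$ given by Eq.~(\ref{eq:pakaD}) specialized to $\tau>1$, the identity $p^2 = (\tau-1)^2 N^2 + 2N\bigl(\tau(\tau-1)+K(\tau+1)\bigr)+(\tau+K)^2$ shows that $p\sim (\tau-1)N$. Consequently $p-\eta+N-1$ has no $O(N)$ term, so that $\ell$ tends to a finite limit, while $\eta$ and $f=\tfrac{1}{2}(p+\eta-N-1)$ both diverge linearly in $N$. This is precisely the mirror of the lossy case, in which $f$ was the finite one and $\ell$ diverged.

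The first step of the calculation is therefore to push the expansion of $p$ one order past the leading term and extract the constant piece of $\ell$. I expect to obtain $\lim_{N\to\infty}\ell=K/(\tau-1)$; plugging into the definition~(\ref{eq:g}) gives
$$
-\lim_{N\to\infty}g(\ell) \;=\; \frac{K}{\tau-1}\log\frac{K}{\tau-1}-\frac{\tau-1+K}{\tau-1}\log\frac{\tau-1+K}{\tau-1},
$$
which accounts for the first two terms on the RHS of~(\ref{eq:NinftyGamp}).

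The remaining piece $g(\eta)-g(f)$ has two divergent summands whose infinities cancel exactly as in part (ii) of Lemma~\ref{lem:limitsNinfty}. Using expansion~(\ref{eq:logExpansion1}) on both $\log\eta$ and $\log f$ (each has a $1/N$ leading coefficient, namely $\tau$ and $\tau-1$ respectively after the substitution $N\to 1/N$) and expansion~(\ref{eq:logExpansion}) on $\log(1+\eta)$ and $\log(1+f)$, I would arrange $g(\eta)=1+\log\tau-\log N+O(N)$ and $g(f)=1+\log(\tau-1)-\log N+O(N)$. Their difference has the $-\log N$ terms and the additive $1$'s cancel, leaving $\log\bigl(\tau/(\tau-1)\bigr)$, which is the last term of~(\ref{eq:NinftyGamp}). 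The only real obstacle is careful bookkeeping: one must retain enough terms in the expansion of $p$ so that both the constant limit of $\ell$ and the constant offsets inside $\log\eta$ and $\log f$ come out correctly. There is no new conceptual ingredient beyond Lemma~\ref{lem:limitsNinfty}, consistent with the author's announcement that this lemma will be treated sketchily.
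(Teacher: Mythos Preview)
Your proposal is correct and follows exactly the route the paper intends: the author explicitly states that the proof of this lemma is ``nearly identical to the proof of Lemma~\ref{lem:limitsNinfty}'', and your mirroring of that argument---with $f$ and $\ell$ exchanging roles so that $\ell\to K/(\tau-1)$ stays finite while $g(\eta)-g(f)\to\log\bigl(\tau/(\tau-1)\bigr)$ via cancelling $-\log N$ divergences---reproduces precisely Eq.~(\ref{eq:NinftyGamp}). One small bookkeeping slip: after the substitution $N\to1/N$ the quantities $1+\eta$ and $1+f$ still carry a leading $1/N$ term, so expansion~(\ref{eq:logExpansion1}) (not~(\ref{eq:logExpansion})) applies to $\log(1+\eta)$ and $\log(1+f)$ as well; this does not affect your stated expressions for $g(\eta)$ and $g(f)$, which are correct.
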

Note that this result already appeared in~\cite{holevo2001evaluating}.
\begin{prop}\label{prop:GBehaviorAmp}
     The non-optimized coherent information $\GG(N,K,\tau)$  has at most one  stationary point for $N\in(0,\infty)$ whenever $\tau>1$ and $K\geq0$.
\end{prop}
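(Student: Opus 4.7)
\noindent\emph{Proof plan.} My plan is to mirror Proposition~\ref{prop:GBehavior}: rewrite the stationary condition $\papa{\GG}{N}=0$ as
\begin{equation*}
    \papa{\eta}{N}\log\frac{1+\eta}{\eta}=\papa{f}{N}\log\frac{1+f}{f}+\papa{\ell}{N}\log\frac{1+\ell}{\ell},
\end{equation*}
now with $\eta=\tau N+\tau-1+K$ and $p,f,\ell$ recomputed from Eqs.~(\ref{eq:cohInfoParsB}), (\ref{eq:cohInfoParsC}), (\ref{eq:pakaD}) using this amp-case $\eta$. The aim is to show that the LHS is strictly decreasing in $K$ while the RHS is strictly increasing in $K$ for all $N>0$, with opposite signs at the $K=0$ and $K\to\infty$ endpoints, forcing the two sides to cross exactly once and hence $\GG(\cdot,K,\tau)$ to possess at most one interior stationary point in $N$.

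I dispose of $K=0$ first. A short calculation gives $p=N(\tau-1)+\tau$, whence $\ell\equiv 0$ and $f=(\tau-1)(N+1)$ identically on $N\geq 0$; the $\papa{g(\ell)}{N}$ term drops out and the stationary condition reduces to $\tau\log((1+\eta)/\eta)=(\tau-1)\log((1+f)/f)$. Multiplying by $N+1$ this becomes $(\eta+1)\log((\eta+1)/\eta)=f\log((f+1)/f)$, and $\eta-f=N$ combined with Lemma~\ref{lem:logIneq}(ii) produces strict LHS $>$ RHS for all $N>0$. Hence $\papa{\GG(N,0,\tau)}{N}>0$ on $(0,\infty)$. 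For $K>0$, the $K$-monotonicity of the LHS is immediate: since $\papa{\eta}{N}=\tau$ is $K$-independent,
\begin{equation*}
    \frac{\partial}{\partial K}\Big[\tau\log\frac{1+\eta}{\eta}\Big]=-\frac{\tau}{\eta(1+\eta)}<0.
\end{equation*}

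The hard step, and the main obstacle, is showing the RHS is strictly increasing in $K$. Using $\papa{\ell}{N}=\papa{f}{N}+(1-\tau)$ (the same identity as Eq.~(\ref{eq:DerxmxpRelation}), now with $1-\tau<0$) I collect the two log terms into $h\log((1+F)/F)+(1-\tau)\log((1+\ell)/\ell)$ with $F=f\ell/(1+f+\ell)$ and $h=\papa{f}{N}$, whose amp-case expressions appear in Eq.~(\ref{eq:fourPapasAmp}). Differentiating in $K$, multiplying through by $F/\papa{h}{K}$, and invoking the tight lower bound $F\log((1+F)/F)>2F/(2F+1)$ of Lemma~\ref{lem:logIneq}(i) reduces positivity of $\papa{(\text{RHS})}{K}$ to the saturated algebraic identity Eq.~(\ref{eq:interestingIdentityAmp}), the counterpart of the lossy identity behind Eq.~(\ref{eq:RHSderwrtKbounded}). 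Verifying that identity, together with the sign facts $f,\ell,F\geq 0$, $\papa{h}{K}>0$ and $\papa{\ell}{K}\geq 0$, is laborious but mechanical. Combined with the $K\to\infty$ asymptotics $p\sim N+\eta+1$, $\ell\to N$, $f\sim K$ --- which yield $\papa{g(\eta)}{N},\papa{g(f)}{N}\to 0$ and $\papa{g(\ell)}{N}\to\log((1+N)/N)>0$, reversing the $K=0$ inequality --- strict opposing monotonicities in $K$ force a single crossing and hence the at-most-one stationary point in $N$, parallel to the conclusion of Proposition~\ref{prop:GBehavior}.
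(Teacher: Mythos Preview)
Your proposal follows the paper's own proof essentially verbatim: the same stationary condition, the same $K=0$ baseline (you invoke Lemma~\ref{lem:logIneq}(ii) after multiplying by $N+1$, the paper argues Eq.~(\ref{eq:StatIneqAmp}) directly), the same $K$-monotonicity argument culminating in the saturated identity Eq.~(\ref{eq:interestingIdentityAmp}) with the ingredients of Eq.~(\ref{eq:fourPapasAmp}), and the same $K\to\infty$ endpoint limits. The only piece you omit is the paper's $N\to0$ discontinuity discussion (Eqs.~(\ref{eq:limitsAmp})--(\ref{eq:limitdifferenceAmp})), but that is used in the proof of the main theorem rather than for the proposition itself.
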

\begin{proof}
  Similarly to Proposition~\ref{prop:GBehavior}, we will be asking  whether Eq.~(\ref{eq:statCondOnN}) can be satisfied and show that it happens at most at one point for $N\geq0$. We start with a simple case $K=0$ to reveal subtle differences compared to the lossy case. We can see that $\eta=(1+N)\tau-1,f=(1+N)(-1+\tau)$ and $\ell=0$. Therefore,
  \begin{align}\label{eq:StatIneqAmp}
    &\papa{\eta}{N}\log{1+\eta\over\eta}-\papa{f}{N}\log{1+f\over f}\nn\\
    &=\tau\log{(1+N)\tau\over(1+N)\tau-1}-(\tau-1)\log{(1+N)\tau-N\over(1+N)\tau-1-N}>0,
  \end{align}
  where the inequality follows from the logarithm properties and because of $\tau>1$. Contrary to the lossy case, when $K>0$ the behavior for $N\GtrSim0$ does not reverse the inequality since
  \begin{subequations}\label{eq:limitsAmp}
      \begin{align}
        \lim_{N\to0}\papa{g(\eta)}{N} & =\tau\log{K+\tau\over K-1+\tau},\label{eq:threeLimitsAAmp} \\
        \lim_{N\to0}\papa{g(f)}{N} & = {\tau(K-1+\tau)\over K+\tau}\log{K+\tau\over K-1+\tau} \label{eq:threeLimitsBAmp}
      \end{align}
  \end{subequations}
  and so clearly $\lim_{N\to0}\big[\papa{g(\eta)}{N}-\papa{g(f)}{N}\big]>0$ as in~(\ref{eq:StatIneqAmp}). However, setting $K>0$ makes $\ell$ and $g(\ell)$ nonzero for $N>0$. The function $g(\ell)$ is itself a well-behaved function for $N\geq0$ but its derivative has a discontinuity at $N=0$:
  $$
  \lim_{N\to0}\papa{g(\ell)}{N}  = 0\quad\mbox{for }K=0
  $$
  but
  \begin{equation}
    \lim_{N\to0}\papa{g(\ell)}{N}  = \infty\quad\mbox{for }K>0\label{eq:threeLimitsCAMP}.
  \end{equation}
  This is an equivalent of the discontinuity in the lossy case leading to $\papa{g(\eta)}{N}<\papa{g(\ell)}{N}+\papa{g(f)}{N}$ in a sufficiently small neighborhood of $N=0$. Hence, by a  different route compared to the lossy case, we again obtain
    \begin{equation}\label{eq:limitdifferenceAmp}
    \lim_{N\to0}\papa{\GG(N,K,\tau)}{N}=-\infty.
  \end{equation}

  The rest of the proof for the general case $K>0$ can be adapted almost verbatim from the lossy case starting with the equivalent of Eq.~(\ref{eq:LHSderwrtK}):
  \begin{equation}\label{eq:LHSderwrtKAmp}
    {\pa\over\pa K}\left[\papa{\eta}{N}\log{1+\eta\over\eta}\right]=-{\tau\over(-1+\tau+K+N\tau)(K+\tau+N\tau)},
  \end{equation}
  which is negative for all investigated parameters. Since the relation Eq.~(\ref{eq:DerxmxpRelation})  remains unchanged for the amplifying case, we can use the bound from Lemma~\ref{lem:logIneq} and again find the second inequality in Eq.~(\ref{eq:RHSderwrtKbounded}) saturated:
  \begin{equation}\label{eq:interestingIdentityAmp}
    {2F\over2F+1}={h\over\papa{h}{K}}{\papa{F}{K}\over1+F}+(1-\tau){F\over\papa{h}{K}}{\papa{\ell}{K}\over(1+\ell)\ell}.
  \end{equation}
  It contains the following expressions:
  \begin{subequations}\label{eq:fourPapasAmp}
    \begin{align}
      F(N,K,\tau) & = {q-p\over2p}, \\
      h(N,K,\tau) & = {1\over2}\left(-1+\tau+{{K(1+\tau)+(-1+\tau)(N(-1+\tau)+\tau)}\over p}\right),\\
      {h\over\papa{h}{K}} & = {p^2\over2\tau q}\big(K(1+\tau)+(-1+\tau)(N(-1+\tau)+\tau+p)\big),\\
      \papa{F}{K} & = {N(1+N)\over p^3}\big(K(1+\tau)+(-1+\tau)(N(-1+\tau)+\tau)\big),\\
      \papa{h}{K} & ={\tau q\over p^3},\\
      \papa{\ell}{K} & = {K+N+\tau+N\tau-p\over2p},
    \end{align}
  \end{subequations}
  where $q$ comes from the second line of Eq.~(\ref{eq:q}).
\end{proof}

\begin{proof}[Proof of the main theorem]
    Following Lemma~\ref{lem:limitsNinfty} and Lemma~\ref{lem:limitsNinftyAmp} we can write  (cf. \cite{holevo2001evaluating}):
    \begin{equation}\label{eq:GforNInfty}
        \lim_{N\to\infty}\GG(N,K,\tau)={K\over|1-\tau|}\log{K\over|1-\tau|}-{|1-\tau|+K\over|1-\tau|}\log{|1-\tau|+K\over|1-\tau|}
        +\log{\tau\over|1-\tau|}
    \end{equation}
    for both $1/2<\tau<1$ (the lossy class $\C(\mathrm{loss})$) and  $\tau>1$ (the amplifying class $\C(\mathrm{amp})$). The infinite limit can be both negative and positive as illustrated in Fig.~\ref{fig:Gexample} for the lossy case. If it is positive then a stationary point discovered in Proposition~\ref{prop:GBehavior} and~\ref{prop:GBehaviorAmp} for $K>0$ and $N\in(0,\infty)$ is necessarily a global minimum whose value is negative (see the grey curve in Fig.~\ref{fig:Gexample} as an example) as implied by Eq.~(\ref{eq:limitdifference}) in the lossy case and Eq.~(\ref{eq:limitdifferenceAmp}) in the amplifying case.
    If $\lim_{N\to\infty}\GG(N,K,\tau)<0$, the coherent information cannot be positive for any $N\in(0,\infty)$ since again there is at most one stationary point in this interval and it is always negative. This possibility is illustrated in Fig.~\ref{fig:Gexample} by the green and blue curves. Then, from Eqs. (\ref{eq:Icoh}),~(\ref{eq:g}) and~(\ref{eq:cohInfoPars}), and by using the standard limit $\lim_{x\to0}[x\log{x}]=0$, we find
    \begin{equation}\label{eq:GforNZero}
      \lim_{N\to0}\GG(N,K,\tau)=0.
    \end{equation}
    For $K=0$ the situation is even simpler. There is no stationary point for $N\in(0,\infty)$ and the coherent information is monotone decreasing on this interval with a (zero) supremum at $N\to0$. This concludes the proof.
\end{proof}

\begin{figure}[t]
  \resizebox{9cm}{!}{\includegraphics{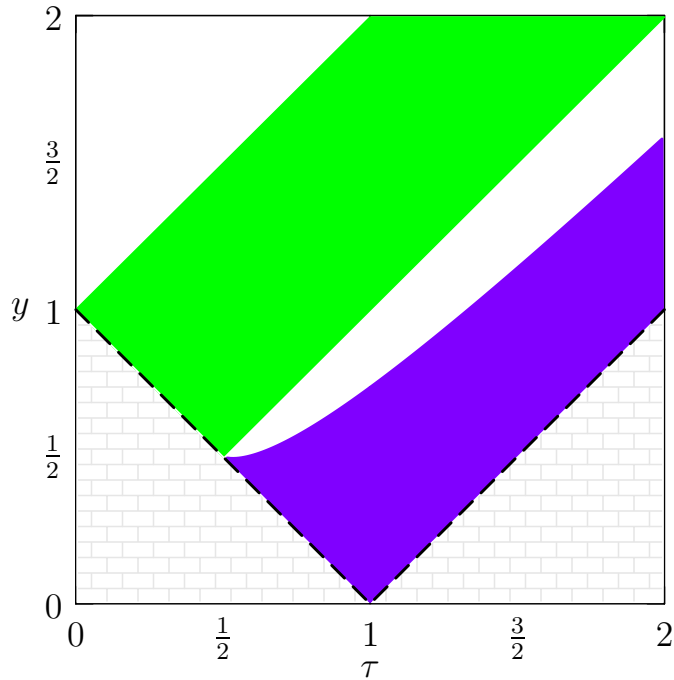}}
   \caption{A section of the parameter region covering three OMG channel classes is depicted: $\C(\mathrm{loss})$ for $0<\tau<1$, $\B_2$ for $\tau=1$ and $y>0$ and $\C(\mathrm{amp})$ for $\tau>1$. The purple area contains OMG channels whose coherent information (and therefore quantum capacity) is positive. The green area are \emph{antidegradable} channels~\cite{devetakshor2005capacity,caruso2006one}, whose quantum capacity (and therefore coherent information) is zero. The channels from the top-left white triangle form a subclass of zero quantum capacity channels known as entanglement-breaking channels~\cite{holevo2007one} and this region extends further in the parameter space. The channels from the white region in between the green  and purple areas have zero coherent information but their quantum capacity is unknown. All the three regions continue indefinitely for $\tau>0$. The dashed lines are given by the condition $K=0$ for the added classical noise. The brickwall (also extending indefinitely) denotes an unphysical region where Gaussian maps are not completely positive. }
    \label{fig:cont}
\end{figure}

\begin{rem}\label{rem:downwards}
    If the limit Eq.~(\ref{eq:GforNInfty})  is negative, the following three possibilities could have in principle occurred: (i) there is still a global minimum but the non-optimized coherent information $\GG$ does not cross zero and remains negative (green curve in Fig.~\ref{fig:Gexample}), (ii) there is no stationary point for $N\in(0,\infty)$ and $\GG$~is negative and monotone decreasing (blue curve) and, (iii) there is an inflection point and $\GG$ again remains negative. Numerical analysis suggests that option (iii) does not happen.
\end{rem}
Even though the case $\tau=1$ belongs to a formally different class of channels $\B_2$, called the additive classical noise channel class, nothing dramatic happens in the above analysis as long as $K\neq0$. As a matter of fact, the expressions in Proposition~\ref{prop:GBehaviorAmp} simplify and manifestly keeps on holding as we approach $\tau=1$ from the right. Hence we have the following Corollary.
\begin{cor}
    \begin{equation}\label{eq:cortau1}
        I_{\rm coh}(\B_2)=\sup_{N}\GG(N,K,1)=\lim_{N\to\infty}\GG(N,K,1)=-1-\log{K}.
    \end{equation}
    holds for all $K>0$.
\end{cor}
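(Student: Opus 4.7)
The plan is to obtain the Corollary by taking the limit $\tau\to 1^+$ in the amplifying case analysis. The additive-classical-noise class $\B_2$ is the boundary $\tau=1$ of $\C(\mathrm{amp})$, and the defining expressions (\ref{eq:cohInfoPars}), (\ref{eq:pakaD}), (\ref{eq:q}) in the amplifying case are jointly continuous in $(N,\tau)$ on $(0,\infty)\times[1,\infty)$ for any fixed $K>0$. Specializing at $\tau=1$ one gets the clean expressions $\eta=N+K$, $p=\sqrt{4NK+(K+1)^2}$, $f=(p+K-1)/2$, $\ell=(p-K-1)/2$, $q=1+K(1+2N)$, all of which are strictly positive for $N,K>0$. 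Hence $\GG(N,K,\tau)\to\GG(N,K,1)$ pointwise in $N$ as $\tau\to 1^+$.

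Next, I would verify that Proposition~\ref{prop:GBehaviorAmp} applies verbatim at $\tau=1$. The ``LHS decreasing in $K$'' computation (\ref{eq:LHSderwrtKAmp}) gives $-1/[(K+N)(K+1+N)]<0$ at $\tau=1$, and in the ``RHS increasing in $K$'' identity (\ref{eq:interestingIdentityAmp}) the prefactor $(1-\tau)$ of the $\papa{\ell}{K}$ term vanishes, so one simply has to check the collapsed identity $2F/(2F+1)=(h/\papa{h}{K})\,\papa{F}{K}/(1+F)$, with the expressions (\ref{eq:fourPapasAmp}) staying well-defined (no denominator vanishes because $p>0$, $q>0$, $\tau=1>0$). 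Therefore $\GG(\cdot,K,1)$ still has at most one stationary point in $(0,\infty)$, and by the same dichotomy as in the proof of the main theorem the supremum is either $\lim_{N\to\infty}\GG(N,K,1)$ (when this quantity is positive) or zero, attained at $N\to 0$.

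It remains to evaluate $L(K):=\lim_{N\to\infty}\GG(N,K,1)$ and confirm $L(K)=-1-\log K$. I would do this in two independent ways, using whichever reads more cleanly. Direct computation: since $p=2\sqrt{NK}+O(N^{-1/2})$, both $f,\ell\sim\sqrt{NK}\to\infty$ while $\eta=N+K\to\infty$, so using $g(x)=\log x+1+o(1)$ one gets
\begin{equation*}
  \GG(N,K,1)=\log(N+K)-\log f-\log\ell-1+o(1)=\log\frac{N}{f\ell}-1+o(1),
\end{equation*}
and the identity $f\ell=(p^2-(K-1)^2)/4=K(N+1)$ yields $L(K)=-1-\log K$. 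Alternatively, substitute $\epsilon=\tau-1\to 0^+$ into (\ref{eq:GforNInfty}); expanding $u\log u-(1+u)\log(1+u)=-\log u-1+o(1)$ with $u=K/\epsilon$ gives the same answer, which also confirms continuity of the $\tau\to 1^+$ limit of the formula proved in Lemma~\ref{lem:limitsNinftyAmp}.

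Finally, the Corollary reads off from the above: by continuous extension $\sup_N\GG(N,K,1)=L(K)=-1-\log K$ whenever $L(K)>0$, while in the complementary regime the supremum is zero as in the main theorem and, in the convention used throughout the paper, is still captured by the same closed-form expression via the $N\to\infty$ limit. The only subtle step is checking that the key identity (\ref{eq:interestingIdentityAmp}) and the proof of Proposition~\ref{prop:GBehaviorAmp} survive at $\tau=1$; once that is granted, everything else is a continuity and asymptotic-expansion exercise.
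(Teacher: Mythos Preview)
Your approach mirrors the paper's exactly: the paper merely remarks that the expressions in Proposition~\ref{prop:GBehaviorAmp} simplify and continue to hold as $\tau\to1^+$, and you spell out those details and the limit computation explicitly. One small slip worth fixing: the identity $f\ell=(p^2-(K-1)^2)/4=K(N+1)$ is wrong, since $\ell=(p-(K+1))/2$ rather than $(p-(K-1))/2$; the correct product is $f\ell=(q-p)/2$ with $q=1+K(1+2N)$, but the asymptotic $f\ell\sim KN$ and hence $L(K)=-1-\log K$ are unaffected.
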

The identity channel is obtained for $K=0$ where $I_{\rm coh}(\B_2)$ diverges~\cite{holevo2001evaluating}. Hence the performed analysis also covered this special case of $\B_2$ where $\ave{n}=0$ (and so $r=1$).

\section{Discussion}

In this work we proved that the coherent information of phase-insensitive one-mode Gaussian (OMG) channels  is maximized in the limit of an infinite input signal power. This has been  known to trivially hold in the case of zero added classical noise $(K=0)$ but the nonzero added noise case is less obvious thanks to a non-trivial dependence of the coherent information $\GG(N,K,\tau)$ on the input signal power $N$. The proof was made relatively straightforward by virtue of an interesting identity involving quantities derived from the von Neumann entropy for bosonic Gaussian systems. The identity may eventually become useful elsewhere.

In a sense, the current analysis covers \emph{almost all} phase-insensitive OMG channels (using the measure-theoretic terminology) whose coherent information is non-vanishing. The channels with zero added classical noise form a mere boundary of the region delimiting  the channels with nonzero added  noise. This can be best visualized by a reparametrization provided by the parameters $\tau\in\bbR$ and $y\geq0$ defined in Section~\ref{sec:main}, that first appeared in~\cite{schafer2013} and was also used in~\cite{schafer2013equivalence,giovannetti2013ultimate}. By performing the optimization and finding that the coherent information is maximized for $\lim_{N\to\infty}\GG(N,K,\tau)$, we can investigate when the limit is non-negative and  obtain a plot identical to a figure from~\cite{bradler2014black} where, incidentally, the currently studied OMG channels appeared in the context of black holes physics and the related issue of information loss. Needless to say that the study presented here is independent of any such physical interpretation.

Looking at Fig.~\ref{fig:cont} we can appreciate two facts. As previously mentioned, we realize the scope of the main theorem proved in this paper. The channels with zero added noise ($K=0$), whose coherent information is easy to optimize~\cite{holevo2012quantum}, are only those lying on the dashed boundary. The OMG channels treated in this paper form an infinite quarter-plane given by $y>|\tau-1|$. However, only the region where $\tau>0$ is interesting since the class of conjugate amplifying OMG channels $\D$, defined for negative $\tau$, is known to be entanglement breaking and therefore cannot be used for reliable quantum communication.

For $\tau>1/2$, several different possibilities occur. The purple region denotes OMG channels whose optimized coherent information is positive  (and so is the quantum capacity). The green region contains so-called antidegradable channels~\cite{devetakshor2005capacity,caruso2006one} and their quantum capacity (and therefore the coherent information as well) is zero. The white wedge-like region in between is a zero coherent information region but the quantum capacity is unknown. In fact, the result of the main theorem presented here can be summarized by saying that in the purple area  the  coherent information $\GG(N,K,\tau)$ cannot be made greater by a different choice of the input power $N$ other than $N\to\infty$ and, conversely, the coherent information cannot be made nonzero in the white area  by any choice of $N>0$.

For $0<\tau\leq1/2$ the OMG channels are either antidegradable (the green region) or entanglement breaking (the corresponding part of the top-left white triangle) and so their coherent information is zero and the quantum capacity as well.

\section*{Acknowledgement}
\thanks{The author thanks an anonymous referee for comments.}

\bibliographystyle{unsrt}


\end{document}